\newlist{dsitemize}{itemize}{1}
\setlist[dsitemize,1]{label=,leftmargin=0mm}
\newcommand{\cmark}{\ding{51}}%
\newcommand{\xmark}{\ding{55}}%
\setlist[enumerate]{noitemsep, topsep=0.5\topsep}
\setlist[description]{noitemsep, topsep=0.5\topsep}
\setlist[itemize]{noitemsep, topsep=0.5\topsep}
\newcommand{\CC}[1][]{$\text{C\hspace{-.25ex}}^{_{_{_{++}}}}
	\ifthenelse{\equal{#1}{}}{}{\text{\hspace{-.625ex}#1}}$}
\DeclarePairedDelimiter{\norm}{\lVert}{\rVert}
\newcommand{\tg}{\mathbf{G}}
\newcommand{\tge}{\mathbf{E}}
\def\thmt@refnamewithcomma #1#2#3,#4,#5\@nil{%
	\@xa\def\csname\thmt@envname #1utorefname\endcsname{#3}%
	\ifcsname #2refname\endcsname
	\csname #2refname\expandafter\endcsname\expandafter{\thmt@envname}{#3}{#4}%
	\fi
}
\newcommand{\new}[1]{\emph{#1}}
\newcommand{\sz}{\scriptstyle }
\definecolor{LOGreen}{rgb}{0.0, 0.6, 0.0}
\tikzset{
	vertex/.style={circle,draw,inner sep=01pt,minimum size=1.3em},
	vertex2/.style={circle,draw,inner sep=01pt,minimum size=1.5em},	
	vertexa/.style={circle,draw,fill=red, minimum size=2mm,inner sep=1pt, font=\scriptsize},	
	vertexb/.style={circle,draw,fill=black, minimum size=2mm,inner sep=1pt, font=\scriptsize},	
	vertexc/.style={circle,draw,fill=green, minimum size=2mm,inner sep=1pt, font=\scriptsize},	
	vertexd/.style={circle,draw,fill=black, minimum size=2mm,inner sep=0pt},
	vertexldg/.style={circle,draw,fill=gray, minimum size=2mm,inner sep=0pt,color=gray},		
	vertexld/.style={draw,fill=white, minimum size=4.1mm,inner sep=0pt},	
	ldgedge/.style={->,> = latex', font=\footnotesize,color=gray,dashed},
	dedge/.style={->,> = latex', font=\footnotesize},
	wedge/.style={->,> = latex', font=\footnotesize, dashed},
	edge/.style={-,> = latex', font=\footnotesize},
}
\newtheorem{definition}{Definition}
\providecommand{\keywords}[1]
{
    \small	
    \textbf{{Keywords---}} #1
}
\begin{document}
	\title{Temporal Graph Kernels for Classifying Dissemination Processes}
	\date{}
	
	\author{Lutz Oettershagen\thanks{Institute of Computer Science I, University of Bonn, 
        \texttt{\{lutz.oettershagen, petra.mutzel\}@cs.uni-bonn.de}.
        }\and  
		Nils M.~Kriege\thanks{Department of Computer Science, TU Dortmund University, \texttt{\{nils.kriege, christopher.morris\}@tu-dortmund.de}.}\and 
		Christopher Morris\footnotemark[2]\and 
		Petra Mutzel\footnotemark[1]}
	\maketitle

\begin{abstract}\small \baselineskip=9pt
    Many real-world graphs or networks are temporal, e.g., in a social network persons only interact at specific points in time. This information directs \emph{dissemination processes} on the network, 
    such as the spread of rumors, fake news, or diseases. However, the current state-of-the-art methods for supervised graph classification are designed mainly for static graphs and may not be able to capture temporal information. Hence, they are not powerful enough to distinguish between graphs modeling different dissemination processes. To address this, we introduce a framework to lift standard graph kernels to the temporal domain. Specifically, we explore three different approaches and investigate the trade-offs between loss of temporal information and efficiency. Moreover, to handle large-scale graphs, we propose stochastic variants of our kernels with provable approximation guarantees. We evaluate our methods on a wide range of real-world social networks. Our methods beat static kernels by a large margin in terms of accuracy while still being scalable to large graphs and data sets. Hence, we confirm that taking temporal information into account is crucial for the successful classification of dissemination processes.
    \\\keywords{Temporal graphs, Classification, Kernels}
\end{abstract}

\section{Introduction}
Linked data arise in various domains, e.g., in chem- and bioinformatics, social network analysis and computer vision, and can be naturally represented as graphs. Therefore, machine learning with graphs has become an active research area of increasing importance.
A prominent method primarily used for supervised graph classification with support vector machines are \new{graph kernels}, which compute a similarity score between pairs of graphs. In the last fifteen years, a plethora of graph kernels has been published~\cite{Kriege2019}. With few exceptions, these are designed for static graphs and cannot utilize temporal information. 
However, real-world graphs often have temporal information attached to the edges, e.g., modeling times of interaction in a social network, which directs any dissemination process, i.e., the spread of information over time, in the temporal graph.
Consider, for example, a social network where persons A and B were in contact before persons B and C became in contact. Consequently, information may have been passed from persons A to C but not vice versa.
Hence, whenever such implicit direction is essential for the learning task, static graph kernels will inevitably fail. %

To further exemplify this, assume that a \mbox{(sub-)group} in a  social network suffers from unspecific symptoms, that occurred at some point in time and are probably caused by a disease. Here, nodes represent persons, and the edges represent contacts between them at certain points in time, cf.~Figure~\ref{fig:example_spread}.
An obvious question now is whether an infectious disease causes the symptoms. However, dissemination processes are typically complex, since persons may have different risk factors of becoming infected, the transmission rate is unknown and, finally, the network structure itself might suffer from noise. Therefore, this question is difficult to answer by just analyzing a single network. But, assume
that similar network data of past epidemics is available. Hence, we can model the detection of dissemination process of a disease as a supervised graph classification problem. In the simplest case, one class contains graphs under a dissemination process of a disease and the other class consists of graphs, where the node labels cannot be explained by the temporal information. 
Furthermore, notice that infections, or disseminated information in general, often may not be recognized or not reported~\cite{who2019rep43}. 
Therefore, we additionally consider the scenario where disseminated information is incomplete. 
Finally, observe that the above learning problem can also model the detection of other dissemination processes in networks, e.g., dissemination of fake news in social media~\cite{Vos+2018} or viruses in computer or mobile phone networks~\cite{Adams2016}. 

The key to solving these classification tasks are methods that adequately take the temporal characteristics of such graphs into account. 
We consider \new{temporal graphs}, where edges exist at specific points in time 
and node labels, e.g., representing infected and non-infected, may change over time. 

\begin{figure}
   \begin{subfigure}[t]{0.25\columnwidth}
        \begin{tikzpicture}[scale=0.8]
        \node[vertexd] (1) at (-0.5,3) {};
        \node[vertexd] (2) at (0.5,3) {};
        \node[vertexd] (3) at (-0.5,2) {};
        \node[vertexd] (4) at (0.5,2) {};
        \node[vertexd] (5) at (-0.5,1) {};
        \node[vertexd] (6) at (0.5,1) {};
        \path[->] 
        (1)  edge[edge] node[midway,above] {\scriptsize$3$} (2)
        (1)  edge[edge] node[midway,left] {\scriptsize$2$} (3)
        (1)  edge[edge] node[midway,below,sloped] {\hspace{6pt}\scriptsize$1,\!2$} (4)
        (2)  edge[edge] node[midway,right] {\scriptsize$3,\!4$} (4)
        (3)  edge[edge] node[midway,below,sloped] {\scriptsize$1$} (6)
        (3)  edge[edge] node[midway,left] {\scriptsize$1$} (5)
        (4)  edge[edge] node[midway,right] {\scriptsize$4$} (6)
        (5)  edge[edge] node[midway,below] {\scriptsize$3$} (6);    
        
        \end{tikzpicture}
        \subcaption{}
        \label{fig:example_spread_a}
    \end{subfigure}\hfill
    \begin{subfigure}[t]{0.8\columnwidth}
        \begin{tikzpicture}[scale=.71]
        \node at (2.5,0.5) {\scriptsize Day 1};
        \node at (5,0.5) {\scriptsize Day 2};
        \node at (7.5,0.5) {\scriptsize Day 3};
        \node at (10,0.5) {\scriptsize Day 4};
        
        \node[vertexa] (1a) at (2,3) {};
        \node[vertexd] (2a) at (3,3) {};
        \node[vertexd] (3a) at (2,2) {};
        \node[vertexb] (4a) at (3,2) {};
        \node[vertexd] (5a) at (2,1) {};
        \node[vertexd] (6a) at (3,1) {};
        \path[->] 
        (1a)  edge[edge] (4a) %
        (3a)  edge[edge]   (6a)
        (3a)  edge[edge] (5a)
        ;    
        
        \node[vertexa] (1b) at (4.5,3) {};
        \node[vertexd] (2b) at (5.5,3) {};
        \node[vertexb] (3b) at (4.5,2) {};
        \node[vertexa] (4b) at (5.5,2) {};
        \node[vertexd] (5b) at (4.5,1) {};
        \node[vertexd] (6b) at (5.5,1) {};
        \path[->] 
        (1b)  edge[edge]  (3b)%
        (1b)  edge[edge]   (4b)
        ;    
        
        \node[vertexa] (1c) at (7,3) {};
        \node[vertexb] (2c) at (8,3) {};
        \node[vertexa] (3c) at (7,2) {};
        \node[vertexa] (4c) at (8,2) {};
        \node[vertexd] (5c) at (7,1) {};
        \node[vertexd] (6c) at (8,1) {};
        \path[->] 
        (1c)  edge[edge]  (2c)%
        (2c)  edge[edge]  (4c)%
        (5c)  edge[edge]  (6c)
        ;    
        
        \node[vertexa] (1d) at (9.5,3) {};
        \node[vertexa] (2d) at (10.5,3) {};
        \node[vertexa] (3d) at (9.5,2) {};
        \node[vertexa] (4d) at (10.5,2) {};
        \node[vertexd] (5d) at (9.5,1) {};
        \node[vertexb] (6d) at (10.5,1) {};
        \path[->] 
        (2d)  edge[edge]  (4d)
        (4d)  edge[edge]  (6d)%
        ;        
        \end{tikzpicture}
        \subcaption{}\vspace{-15pt}
        \label{fig:example_spread_b}
    \end{subfigure}
    \caption{Example of an epidemic outbreak:
        (a) A temporal graph where the vertices represent persons and edges their interaction.
        (b) For each day $t$ there is a static graph containing only the edges existing on day $t$. 
        Infected vertices are labeled red.}\vspace{-20pt}
    \label{fig:example_spread}
\end{figure}
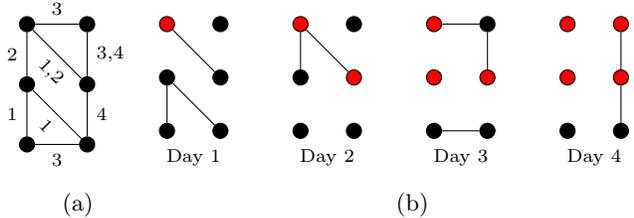
\subsection{Contributions}
We propose graph kernels for classifying dissemination processes in temporal graphs.
More specifically, our main contributions are:
\begin{enumerate}%
    \item We introduce three different techniques to map temporal graphs to static graphs such that conventional, static kernels can be successfully applied. For all three approaches, we analyze the trade-offs between loss of temporal information and size of the transformed graph.
    \item For large-scale problems, we present a stochastic kernel directly based on temporal walks with provable approximation guarantees.
    \item We comprehensively evaluate our methods using real-world data sets with simulations of epidemic spread.
    Our results confirm that taking temporal information into account is crucial for the detection of dissemination processes. %
\end{enumerate}
\subsection{Related Work}
Graph kernels have been studied extensively in the past 15 years, see~\cite{Kriege2019}. Almost all kernels focus on static graphs.
Important approaches include random-walk and shortest paths based kernels~\cite{Gaertner2003,Sugiyama2015,Bor+2005,Kri+2017b}, as well as the Weisfeiler-Lehman subtree kernel~\cite{She+2011,Mor+2017}. 
Further recent works focus on assignment-based approaches~\cite{Kri+2016,Nik+2017}, spectral approaches~\cite{Kon+2016}, and graph decomposition approaches~\cite{Nik+2018}.

There has been some work considering dynamic graphs.
In~\cite{Li+2015} a family of algorithms to recompute the random walk kernel efficiently when graphs are modified is presented.
Wu et al.~\cite{Wu/etal/2014a} propose graph kernels for human action recognition in video sequences. To this end, they encode the features of each frame as well as the dynamic changes between successive frames by separate graphs, which are then compared by random walk kernels.
Paaßen et al.~\cite{Paassen/etal/2017a} use graph kernels for predicting the next graph in a dynamically changing series of graphs. Using a similar setting, Anil et al.~\cite{Anil2014} propose spectral graph kernels to predict the evolution of social networks. General spatio-temporal convolution kernels for trajectory data of simultaneously moving objects were introduced in~\cite{Knauf2016}. 
Wang et al.~\cite{wang2018time} study the classification of temporal graphs in which both vertex and edge sets can change over time. 
Here, information propagation is considered as a sequence of individual graphs, where the number of vertices is increasing when an information outburst occurs.
They aim to identify such outbursts of information propagation. 
However, to the best of our knowledge, no graph kernels have been suggested that take temporality of edges and labels, as well as dissemination processes on graphs into account.

An extensive overview of temporal graphs, their static representations, and temporal walks can be found in \cite{holme2015modern,michail2016introduction}.
In~\cite{Nguyen2018} temporal random walks are used to obtain node embeddings for link prediction in evolving networks. 
In \cite{grindrod2011communicability}, the Katz centrality is extended to temporal graphs using temporal walks.
Holme~\cite{holme2013epidemiologically} examines how temporal graphs can be used for epidemiological models.
The author evaluates different static representations of temporal graphs by %
comparing predicted and simulated epidemic outbreak sizes.

Identifying vertices that play an important role in dissemination processes has also been studied. For example, Leskovec et al.~\cite{leskovec2007cost} study the problem of placing sensors in a water distribution network to quickly detect contaminants. Recently, methods for dynamic graphs were proposed where edges may be added to the graph as time progresses. All of these approaches focus on link prediction in single graphs, see, e.g.,~\cite{Nguyen2018, Tri+2019}.  
Graph neural networks~\cite{Gil+2017} emerged as an alternative for graph classification. Our new approaches can be combined with neural architectures, e.g., see~\cite{Mor+2019}.

\section{Preliminaries}\label{sec:preliminaries}
A \emph{labeled, undirected (static)} \emph{graph} $G=(V, E, l)$ consists of a finite set of vertices $V$, a finite set $E\subseteq\{\{u,v\}\subseteq V\mid u\neq v\}$ of undirected edges, 
and a labeling function 
$l \colon V \cup E \to \Sigma$ that assigns a label to each vertex or edge, where $\Sigma$ is a finite alphabet.
In a labeled, \emph{directed} graph  $E\subseteq \{(u,v)\in V\times V \mid u\neq v\}$.
We use $V(G)$ to denote the set of vertices of $G$. A (static) \emph{walk} in a graph $G$ is an alternating sequence of vertices and edges connecting consecutive vertices. 
For notational convenience we sometimes omit edges. The length of a walk $(v_1,v_2,\ldots,v_{k+1})$ is $k$.

A \emph{labeled, undirected, temporal graph} $\tg=(V, \tge, l)$ consists of a finite set of vertices $V$, a finite set $\tge$ of undirected \emph{temporal edges} $e=(\{u,v\},t)$ with $u$ and $v$ in $V$, $u\neq v$, \emph{availability time} (or \emph{time stamp}) $t \in \mathbb{N}$, and a labeling function. Here the labeling function $l \colon V\times T \to \Sigma$ assigns a label to each vertex at each time step $t\in T = \{1,\ldots,t_{\max}+1\}$ with $t_{\max}$ being the largest time stamp of any $e\in \tge$. 
Note that for a temporal graph the number of edges is not polynomially bounded by the number of vertices. For $v\in V$ let $T(v)$ be the set of availability times of edges incident to $v$.
For convenience, we regard the set of availability times as a sequence that is ordered by the canonical ordering on the natural numbers.
The bijection $\tau_v \colon T(v) \rightarrow \{1,\ldots,|T(v)|\}$ assigns to each time its position in the ordered sequence of \mbox{$T(v)$ for $v\in V$}. 
The degree $d(v)$ of a vertex $v$ in a temporal graph is the sum of the numbers of edges incident to $v$ over all time steps.

\subsection{Kernels for Static Graphs}\label{sec:static_gk}

A \emph{kernel} on a non-empty set $\mathcal{X}$ is a positive semidefinite function 
$\kappa \colon \mathcal{X} \times \mathcal{X} \to \mathbb{R}$.
Equivalently, a function $\kappa$ is a kernel if there is a \emph{feature map} 
$\phi \colon \mathcal{X} \to \mathcal{H}$ to a Hilbert space $\mathcal{H}$ with inner product 
$\langle \cdot, \cdot \rangle$, such that 
$\kappa(x,y) = \langle \phi(x),\phi(y) \rangle$ for all $x$ and $y$ in $\mathcal{X}$.
Let $\mathcal{G}$ be the set of all graphs, then a function $\mathcal{G} \times \mathcal{G} \to \mathbb{R}$ is a \emph{graph kernel}.
We briefly summarize two well-known kernels for static graphs.

\paragraph{Random walk kernels} measure the similarity of two graphs by counting their (weighted) common walks~\cite{Gaertner2003,Sugiyama2015}. 
For a walk $w=(v_1,e_1,\dots,v_{k+1})$ let $L(w)=(l(v_1),l(e_1),\dots,l(v_{k+1}))$ denote the labels encountered on the walk. Two walks $w_1$ and $w_2$ are considered to be common if $L(w_1) = L(w_2)$.
We consider the \emph{$k$-step random walk kernel} $\kappa^k_{\text{RW}}(G,H) = \langle \phi_{\text{RW}}(G), \phi_{\text{RW}}(H)\rangle$ counting walks up to length $k$. 
Here, the feature map $\phi_{\text{RW}}$ is defined by $\phi_{\text{RW}}(G)_s = |\{w \in \mathcal{W}_\ell(G) \mid s = L(w)\}|$, where $s \in \Sigma^{2\ell+1}$ and $\mathcal{W}_\ell(G)$ is the set of walks in $G$ of length $\ell \in \{0,\dots,k\}$.

\paragraph{Weisfeiler--Lehman subtree kernels} are based on the well-known \new{color refinement algorithm} for isomorphism testing~\cite{She+2011}: Let $G$ and $H$ be graphs, and $l$ be a labeling function $V(G) \cup V(H)\to \Sigma$. In each iteration $i \geq 0$, the algorithm computes a labeling function $l^i \colon V(G) \cup V(H)\to \Sigma$, where $l^0 = l$. Now in iteration $i>0$, we set 
$l^{i}(v) = (l^{i-1}(v),\{\!\!\{ l^{i-1}(u) \mid u \in N(v) \}\!\!\})$
for $v$ in $V(G) \cup V(H)$. In practice, one maps the above pair to a unique element from $\Sigma$. The idea of the \emph{Weisfeiler--Lehman subtree kernel}~\cite{She+2011} is to compute the above algorithm for $h \geq 0$ iterations and after each iteration $i$ compute a feature map $\phi^i(G)$ in $\mathbb{R}^{|\Sigma_i|}$ for each graph $G$, where $\Sigma_i \subseteq \Sigma$ denotes the image of $l^i$. Each component $\phi^i(G)_{c}$ counts the number of occurrences of vertices labeled with $c$ in $\Sigma_i$. The overall feature map $\phi_{\text{WL}}(G)$ is defined as the concatenation of the feature maps of all $h$ iterations. 
Then, the Weisfeiler--Lehman subtree kernel for $h$ iterations is $\kappa^h_{\mathrm{WL}}(G,H) = \langle \phi_{\text{WL}}(G), \phi_{\text{WL}}(H) \rangle$. 
This kernel can also be interpreted in terms of walks. A label $l^i(v)$ represents the unique rooted tree of height $i$ obtained by simultaneously taking all possible walks  of length $i$ starting at $v$, where repeated vertices visited in the past are treated as distinct. 

\section{A Framework for Temporal Graph Kernels}\label{sec:temporalk}
\begin{table*}[t]\centering	\renewcommand{\arraystretch}{1.1}
\caption{Overview of the trade-offs of the proposed transformations. 
	The third column describes the ability of the approaches to take waiting times into account: \xmark~not supported, \ding{72}~always,  \cmark~approach is flexible.
} 
\label{table:comp}
\resizebox{.95\textwidth}{!}{ 	\renewcommand{\arraystretch}{1.2}
	\begin{tabular}{lccc}
		\textbf{Transformation} 		& \textbf{Preserves information} 	& \textbf{Waiting times} & \textbf{Size of static graph}   \\\toprule 
		Reduced Graph Representation 	& \xmark							&  \xmark  			 & $\mathcal{O}(|V|^2)$  		  \\ 
		Directed Line Graph Expansion 	& \cmark 							&  \cmark                & $\mathcal{O}(|\tge|^2)$ 		  \\ 
		Static Expansion 				& \cmark 							&  \ding{72}  				 & $\mathcal{O}(|\tge|)$ 	 
	\end{tabular} }
\vspace{-10pt}
\end{table*}

In the following temporal graphs are mapped to a static graphs such that conventional static kernels can be applied, e.g., the random walk or Weisfeiler-Lehman subtree kernel. 
To catch temporal information, we use \emph{temporal walks}, 
which are time respecting walks. 
That is, the traversed edges along a temporal walk have strictly increasing availability times.  
We assume that traversing an edge in a temporal walk does need one unit of time and is not possible instantaneously.  
\begin{definition}\label{definition:tw}
	A \emph{temporal walk} of length $\ell$ is an alternating sequence of vertices and temporal edges 
	$\left(v_1,e_1=(\{v_1,v_2\},t_1),v_2,\dots,e_\ell=(\{v_\ell,v_{\ell+1}\},t_\ell),v_{\ell+1}\right)$ 
	such that $t_{i} < t_{i+1}$ for $1\leq i <\ell$. 
	For a temporal walk the \emph{waiting time} at vertex $v_i$ with $1 < i \leq \ell$ is $t_i-(t_{i-1}+1)$.
    The set of temporal walks (of length $\ell$) in a temporal graph $\tg$ is denoted by $\mathcal{W}^\mathit{\text{tmp}}(\tg)$ ($\mathcal{W}^\mathit{\text{tmp}}_\ell(\tg)$).
    Finally, we define the function $L$ that maps a temporal walk $w$
    to the label sequence $L(w)=(l(v_1,t_1),l(v_2,t_1+1),l(v_2,t_2),l(v_3,t_2+1),\dots,l(v_\ell,t_\ell),l(v_{\ell+1},t_\ell+1))\text{.}$
\end{definition}
Temporal walks enable us to gain insights into the interpretation of the derived temporal graph kernels whenever the static graph kernel can be understood in terms of walks. This is natural in the case of random walk kernels, but also for the widely-used Weisfeiler-Lehman subtree kernel, cf.\@ Section~\ref{sec:static_gk}. We  introduce three approaches, that differ in the size of the resulting static graph and in their ability to preserve temporal information as well as to model waiting times, see \Cref{table:comp} for an overview.

\subsection{Reduced Graph Representation}\label{subsec:rdrep}
First, we propose a straight-forward approach to incorporate temporal information in static graphs.
In a temporal graph $\tg=(V, \tge, l)$ a pair of vertices may be connected with multiple edges each with a different availability time.
In this case, we only preserve the edge with the earliest availability time and delete all other edges. %
We obtain the subgraph $\tg'=(V, \tge', l)$ with $\tge'\subseteq\tge$.
From this we construct a static, labeled, undirected graph $\mathit{RG}(\tg)=(V,E,l_s)$ by inserting an edge $e=\{u,v\}$ into $E$  for every temporal edge $e'=(\{u,v\},t')\in \tge'$ . 
We set the new static edge labels  $l_s(e)$ to the number of the position $\tau(t')$ in the ordered sequence of all (remaining)  availability times $t'$ in $\tge'$.
Next, the temporal development of the dissemination is encoded using the vertex labels.  
Therefore, if the label of a vertex $v\in V$ in $\tg$ stays constant over time, we set $l_s(v)=0$. 
For the remaining vertex labels we take the ordered sequence $T_V$ of all points in time when any vertex label changes for the first time.
Then, for each vertex changing its label for the first time at time $t_l$, we set $l_s(v)=\tau(t_l)$, where $\tau(t_l)$ denotes the position of $t_l$ in the  sequence $T_V$.

Applying this procedure results in the reduced graph representation $\mathit{RG}(\tg)=(V, E, l_s)$. 
Clearly, the transformation may lead to a loss of information.
However, notice that $\mathit{RG}(\tg)$ is a simple, undirected graph with at most one edge between each pair of vertices. 
Hence, its number of edges is bounded by $|V|^2$, which can be much smaller than $|\tge|$.

\subsection{Directed Line Graph Expansion}\label{subsec:dlk}
In order to avoid a loss of information, %
we propose to represent temporal graphs by 
directed static graphs that are capable of fully encoding temporal information.

\begin{definition}[Directed line graph expansion] 
    Given
	 a temporal graph $\tg=(V,\tge,l)$, the \emph{directed line graph expansion}
	$\mathit{DL}(\tg)=(V',E',l')$ is the directed graph, where  
	every temporal edge $(\{u,v\}, t)$ is represented by two vertices $n^t_{\overrightarrow{uv}}$
	and $n^t_{\overrightarrow{vu}}$, and there is an edge from  
    $n^t_{\overrightarrow{uv}}$ to  $n^{s}_{\overrightarrow{xy}}$
    if $v=x$ and $t<s$.
    For each vertex $n^{t}_{\overrightarrow{uv}}$, we set the label $l'(n^{t}_{\overrightarrow{uv}}) = (l(u,t), l(v,t+1))$.
\end{definition}
\Cref{fig:example_ld} shows an example of the trans\-for\-ma\-tion for the graph shown in \cref{fig:example_tempG}. 
The following lemma relates temporal walks in a temporal graph and the static walks in its directed line graph expansion.
\begin{proposition}\label{lem:tmp_walk_to_walk}
    Let $\tg$ be a temporal graph and $\ell \geq 0$.
    The walks in $\mathcal{W}_\ell(\mathit{DL}(\mathbf{G}))$ are in one-to-one 
    correspondence with the temporal walks in $\mathcal{W}^\text{tmp}_{\ell+1}(\tg)$.
\end{proposition}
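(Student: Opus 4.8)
The plan is to exhibit an explicit map $\Phi$ from $\mathcal{W}^\text{tmp}_{\ell+1}(\tg)$ to $\mathcal{W}_\ell(\mathit{DL}(\tg))$ and verify that it is well-defined, injective, and surjective. Given a temporal walk $w = \left(v_1, e_1, v_2, \dots, e_{\ell+1}, v_{\ell+2}\right)$ with $e_i = (\{v_i,v_{i+1}\}, t_i)$ and $t_1 < t_2 < \dots < t_{\ell+1}$, I map it to the vertex sequence $\Phi(w) = \left(n^{t_1}_{\overrightarrow{v_1 v_2}}, n^{t_2}_{\overrightarrow{v_2 v_3}}, \dots, n^{t_{\ell+1}}_{\overrightarrow{v_{\ell+1} v_{\ell+2}}}\right)$ in $\mathit{DL}(\tg)$, i.e.\ each traversed temporal edge, together with its direction of traversal, becomes one vertex of the line graph expansion. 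This sequence has $\ell+1$ vertices, hence length $\ell$.

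First I would check that $\Phi(w)$ is indeed a walk: each $n^{t_i}_{\overrightarrow{v_i v_{i+1}}}$ is a vertex of $\mathit{DL}(\tg)$ because $(\{v_i,v_{i+1}\},t_i)\in\tge$, and by the definition of the directed line graph expansion there is an arc from $n^{t_i}_{\overrightarrow{v_i v_{i+1}}}$ to $n^{t_{i+1}}_{\overrightarrow{v_{i+1} v_{i+2}}}$ exactly because the head vertex $v_{i+1}$ of the former coincides with the tail vertex $v_{i+1}$ of the latter and $t_i < t_{i+1}$, which holds since $w$ is time respecting.

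Then I would construct the inverse. Since $\mathit{DL}(\tg)$ is a simple digraph, a walk of length $\ell$ is determined by its vertex sequence $(n_1,\dots,n_{\ell+1})$, and each $n_i$ encodes, in its name, a temporal edge together with a traversal direction, say $n_i = n^{s_i}_{\overrightarrow{a_i b_i}}$. The presence of the arc $n_i \to n_{i+1}$ forces $b_i = a_{i+1}$ and $s_i < s_{i+1}$. Setting $v_i := a_i$ for $1\le i\le \ell+1$, $v_{\ell+2} := b_{\ell+1}$, and $t_i := s_i$ yields a sequence $\left(v_1, (\{v_1,v_2\},t_1), v_2, \dots, (\{v_{\ell+1},v_{\ell+2}\},t_{\ell+1}), v_{\ell+2}\right)$ whose edge time stamps are strictly increasing, hence a temporal walk of length $\ell+1$; one checks directly that this assignment is a two-sided inverse of $\Phi$, which gives the claimed bijection. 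The base case $\ell = 0$, a single line-graph vertex versus a single temporal edge traversal, is immediate.

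Nothing here is deep; the only points requiring care are the bookkeeping around the off-by-one (a length-$\ell$ walk in $\mathit{DL}(\tg)$ has $\ell+1$ vertices and corresponds to a temporal walk with $\ell+1$ edges) and the observation that the ordered subscript $\overrightarrow{uv}$ records the direction of traversal, so that passing to the line graph and back loses no information. I would also remark, since it is what makes the statement useful for the kernels, that $\Phi$ is label-compatible: concatenating the vertex labels $l'(n_i) = (l(v_i,t_i), l(v_{i+1},t_i+1))$ along $\Phi(w)$ reproduces exactly the label sequence $L(w)$ of \Cref{definition:tw}.
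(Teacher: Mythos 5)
Your proof is correct and follows essentially the same route as the paper: an explicit bijection sending each traversed temporal edge (with its direction) to the corresponding vertex $n^{t_i}_{\overrightarrow{v_i v_{i+1}}}$ of $\mathit{DL}(\tg)$, with the adjacency and time-ordering conditions checked in both directions; you even handle the $\ell=0$ case and the label compatibility, which the paper treats in remarks immediately after its proof. No gaps.
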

\begin{proof}
    Let $(v_1, (\{v_1,v_2\}, t_1), v_2, \ldots, (\{v_\ell,v_{\ell+1}\}, t_\ell), v_{\ell+1})$ be a temporal walk of length
    $\ell \geq 2$ in $\tg$. 
    Then $\left(n^{t_1}_{\overrightarrow{v_1v_2}}, n^{t_2}_{\overrightarrow{v_2v_3}}, \ldots, n^{t_\ell}_{\overrightarrow{v_\ell v_{\ell+1}}}\right)$ is a walk of length $\ell-1$ in $\mathit{DL}(\tg)$, since the vertices represent temporal edges of $\tg$ that are consecutively connected by directed edges in $\mathit{DL}(\tg)$ provided that $t_i < t_{i+1}$ for  $i \in \{1,\ldots, \ell-1\}$. This holds for every temporal walk.
    
    Vice versa, let $\left(n^{t_1}_{\overrightarrow{v_1v_2}}, n^{t_2}_{\overrightarrow{v_2v_3}}, \ldots, n^{t_\ell}_{\overrightarrow{v_\ell v_{\ell+1}}}\right)$ be a walk in $\mathit{DL}(\tg)$. 
    Due to the construction of $\mathit{DL}(\tg)$ the time stamps satisfy $t_1 < t_2 < \cdots < t_\ell$ 
    and we can construct a unique temporal walk in $\tg$ from the sequence of vertices 
    as above.
\end{proof}
Note that for $\ell=0$ the two vertices $n^{t}_{\overrightarrow{uv}}$ 
and $n^{t}_{\overrightarrow{vu}}$ representing the same temporal edge $(\{u,v\},t)$
correspond to two different temporal walks traversing the edge in different directions.
In \cref{fig:example_ld} the walk $(n^2_{\overrightarrow{ca}},n^3_{\overrightarrow{ab}},n^7_{\overrightarrow{bc}})$ in $DL(\tg)$ of length $2$ corresponds to the temporal walk $(c,(\{c,a\},2),a,(\{a,b\},3),b,(\{b,c\},7),c)$ of length $3$ in the temporal graph $\tg$. 

The vertex labeling of $\mathit{DL}(\tg)$ is sufficient to encode 
all the label information of the temporal graph $\tg$, i.e., two temporal walks 
exhibit the same label sequence (according to the function $L$ in \Cref{definition:tw}) if and only if the corresponding walks in the 
directed line graph expansion have the same label sequence. 
Therefore, all static kernels that can be interpreted in terms of walks are
lifted to temporal graphs and the concept of temporal walks by applying them to 
the directed line graph expansion.
Moreover, the directed line graph expansion supports to take waiting times into 
account by annotating an edge 
$\left(n^s_{\overrightarrow{uv}}, n^{t}_{\overrightarrow{vw}}\right)$ with the 
waiting time $t-s-1$ at $v$. 
We proceed by studying basic structural properties of the directed line graph
expansion.
\begin{proposition}\label{proposition:dlsize}
	Let $\tg=(V,\tge)$ be a temporal graph and $\mathit{DL}(\tg)=(V',E')$ its
	directed line graph expansion. Then, $|V'| = 2|\tge|$ and 
	\vspace{-4pt}$$|E'| \leq -|\tge|+\frac{1}{2} \sum_{v \in V} d^2(v) = \mathcal{O}(|\tge|^2)\text{,}\vspace{-4pt}$$
	where $d(v)$ denotes the degree of $v$ in $\tg$. 
\end{proposition}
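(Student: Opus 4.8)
The plan is to read off the vertex count directly from the definition and to bound the number of edges by charging each edge of $\mathit{DL}(\tg)$ to the vertex of $\tg$ it runs ``through''. The identity $|V'| = 2|\tge|$ is immediate: each temporal edge $(\{u,v\},t)\in\tge$ contributes exactly the two distinct vertices $n^{t}_{\overrightarrow{uv}}$ and $n^{t}_{\overrightarrow{vu}}$, and distinct temporal edges contribute disjoint pairs, so there is nothing to do here.

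For the edge bound I would first observe that any edge of $\mathit{DL}(\tg)$, say from $n^{t}_{\overrightarrow{uv}}$ to $n^{s}_{\overrightarrow{xy}}$, satisfies $v = x$ by construction, hence it is associated with a unique vertex $v\in V$; consequently $|E'| = \sum_{v\in V} e(v)$, where $e(v)$ is the number of edges of $\mathit{DL}(\tg)$ from a vertex of the form $n^{t}_{\overrightarrow{wv}}$ to a vertex of the form $n^{s}_{\overrightarrow{vw'}}$. Fixing $v$, I would enumerate the $d(v)$ temporal edges incident to $v$ as $e_1,\dots,e_{d(v)}$ with time stamps $t_1,\dots,t_{d(v)}$, regarded as a multiset since $v$ may be incident to several edges at the same time step. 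Each such incident edge $e_i = (\{v,x_i\},t_i)$ supplies exactly one ``incoming'' vertex $n^{t_i}_{\overrightarrow{x_i v}}$ and one ``outgoing'' vertex $n^{t_i}_{\overrightarrow{v x_i}}$ at $v$, and by the construction of $\mathit{DL}(\tg)$ there is an edge from the incoming copy of $e_i$ to the outgoing copy of $e_j$ if and only if $t_i < t_j$. A short check shows this correspondence is a bijection, so $e(v)$ equals the number of ordered pairs $(i,j)$ with $t_i < t_j$.

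Grouping the incident edges of $v$ by equal time stamps --- say the distinct values are $s_1 < \dots < s_k$ with multiplicities $m_1,\dots,m_k$, so $\sum_r m_r = d(v)$ --- I would then compute
\[
e(v) = \sum_{1\le r<r'\le k} m_r m_{r'} = \tfrac{1}{2}\Big(d(v)^2 - \sum_{r} m_r^2\Big) \le \tfrac{1}{2}\big(d(v)^2 - d(v)\big),
\]
where the last step uses $m_r \ge 1$, hence $\sum_r m_r^2 \ge \sum_r m_r = d(v)$. Summing over $v \in V$ and invoking the temporal handshake identity $\sum_{v\in V} d(v) = 2|\tge|$ (each temporal edge is incident to exactly two vertices) gives $|E'| = \sum_{v} e(v) \le \tfrac{1}{2}\sum_{v\in V} d^2(v) - |\tge|$, which is the claimed inequality; the $\mathcal{O}(|\tge|^2)$ bound then follows from $\sum_{v} d^2(v) \le \big(\sum_{v} d(v)\big)^2 = 4|\tge|^2$.

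I do not expect a real obstacle; the only point that needs a little care is the presence of ties among the time stamps of the edges incident to a vertex, which is why I would phrase the count via the multiplicities $m_r$ rather than assuming all incident times distinct. This formulation also makes the equality case transparent: $|E'| = \tfrac{1}{2}\sum_v d^2(v) - |\tge|$ holds exactly when, at every vertex, all incident temporal edges carry pairwise distinct time stamps.
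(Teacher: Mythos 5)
Your proof is correct, and it reaches the exact bound $|E'| \le \tfrac{1}{2}\sum_{v} d^2(v) - |\tge|$ by the same underlying decomposition as the paper — charging each edge of $\mathit{DL}(\tg)$ to the unique vertex of $\tg$ shared by the two temporal edges it connects — but the execution differs. The paper routes the count through the classical formula for the number of edges of a directed line graph ($\sum_v d^-(v)d^+(v)$, citing Harary), applied to the time-stripped bidirected version of $\tg$, and then argues that the temporal constraint eliminates at least half of the cross-pairs and all same-edge pairs; the bookkeeping for the $-|\tge|$ term there is somewhat terse. You instead count directly: at each $v$, the edges of $\mathit{DL}(\tg)$ through $v$ are in bijection with ordered pairs of incident temporal edges with strictly increasing time stamps, which you evaluate exactly as $\tfrac{1}{2}\bigl(d(v)^2 - \sum_r m_r^2\bigr)$ after grouping by time stamp. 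This buys you a self-contained argument with no external citation, an explicit treatment of ties (which the paper's halving argument glosses over), and a clean characterization of when equality holds — namely when all time stamps incident to each vertex are distinct. Both arguments are sound; yours is arguably the more transparent of the two.
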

\begin{proof}
    The number of vertices directly follows from the construction. The directed line
    graph expansion $\mathit{DL}(\tg)$ of the temporal graph $\tg$ is closely related
    to the classical directed line graph of a modified copy of $\tg$ obtained by deleting 
    all time stamps (leading to indistinguishable parallel edges) and replacing all 
    undirected edges by two directed copies. 
    We use the following classical result~\cite{Harary1960}.
    For a directed graph $G=(V,E)$ the number of edges in the directed line graph is
    $
    \sum_{v \in V} d^-(v)\cdot d^+(v)$, where $d^-(v)$ is the in- and $d^+(v)$ the outdegree of $v$.
    After replacing all undirected edges by two directed copies the in- and outdegrees are equal for each vertex.
    Furthermore, the directed line graph expansion of $\tg$ has at most half the edges, because for the vertices 
    $n^t_{\overrightarrow{uv}}$, $n^{s}_{\overrightarrow{vw}}$ and $n^t_{\overrightarrow{vu}}$,
    $n^{s}_{\overrightarrow{wv}}$ at most one time constraint $s<t$ or $s>t$ can be
    satisfied. 
    Finally, between $n^{t}_{\overrightarrow{vw}}$ and $n^t_{\overrightarrow{wv}}$ there is no edge, therefore we subtract $|\tge|$ and the result follows.
\end{proof}
Since a cycle in the directed line graph expansion would correspond to a cyclic 
sequence of edges with strictly increasing time stamps, the directed line graph 
$\mathit{DL}(\tg)$ of a temporal graph $\tg$ is acyclic.
Consequently, the maximum length of a walk in the directed line graph expansion
is bounded. Therefore, there is no need to 
down-weight walks with increasing length to ensure convergence, which avoids the problem of
\emph{halting}~\cite{Sugiyama2015}.

\subsection{Static Expansion Kernel}\label{subsec:sek}
A disadvantage of the directed line graph expansion is that it may lead to a quadratic blowup w.r.t.\ the number of temporal edges. 
Here, we propose an approach that utilizes the \emph{static expansion} of a temporal graph resulting in a static graph of linear size.
The static expansion $\mathit{SE}(\tg)$ of a temporal graph $\tg$ is a static, directed and acyclic graph that 
contains the temporal information of $\tg$.
Similar approaches for static expansions have been used to solve a variety of problems on temporal graphs, see, e.g.,~\cite{michail2016traveling}. 

For $\tg=(V,\tge, l)$ we construct $\mathit{SE}(\tg)=(U,E,l')$ with
$U$ being a set of \emph{time-vertices}. Each time-vertex $(v,t)\in U$ represents a vertex $v\in V$ at time $t$. 
Time-vertices are connected by directed edges that mirror the flow of time, i.e., if an edge from $(v,t)\in U$ to $(u,s)\in U$ exists, then $t<s$.
Because edges in $\tge$ are non-directed, the transformation has to consider both possible directions. It follows, that for each temporal edge $e \in \tge$, we introduce at most four time-vertices that represent the start and end points of $e$.
Next, we add edges that correspond to temporal edges in $\tge$, and additional edges  that represent possible waiting times at a vertex.
\begin{definition}[Static expansion]\label{def:staticexpansion}
	For a temporal graph $\tg=(V,\tge, l)$, we define the static expansion as a labeled, directed graph $\mathit{SE}(\tg)=(U,E, l')$, with vertex set
	$U=\{(u,t), (v,t), (u,t+1), (v,t+1) \mid  (\{u,v\},t) \in \tge \}\text{, and}$ edge set
	$E=E_N \cup E_{W_1} \cup E_{W_2}$, where %
	\begin{align*}
	E_N\,\,&=\big\{\big((u,t), (v,t+1)\big), \big((v,t),(u,t+1)\big) \mid \\&\hspace{9mm} (\{u,v\},t) \in \tge \big\} 	\text{,} \\
	E_{W_1}&=\big\{\big((w,i+1), (w,j)\big)\mid (w,i+1), (w,j)\in U\text{, }
	\\&i,j\in T(w),\tau_w(i)+1 = \tau_w(j)\text{ and } i+1<j \big\}
	\text{, and} \\
	E_{W_2}&=\big\{\big((w,i), (w,j)\big)\mid (w,i), (w,j)\in U \text{, }
	\\&i,j\in T(w),\tau_w(i)+1 = \tau_w(j) \text{ and } i<j \big\}
	\text{.} %
	\end{align*}
	For each time-vertex $(w,t)\in U$, we set $l'((w,t))= l(w, t)$. 
	For each edge in $e\in E_N$, we set $l'(e)= \eta$,
	and for each edge in $e\in E_{W_1} \cup E_{W_2}$, we set $l'(e)= \omega$.
\end{definition}
\Cref{fig:example_tg} shows an example of the transformation.
\begin{figure}[t]
	\centering
	\begin{subfigure}{0.24\textwidth}
		\centering
		\begin{tikzpicture}[scale=0.6]
		\node[vertexa] (1) at (0,0) [label=left:\scriptsize$a$]{};
		\node[vertexa] (3) at (3,-3) [label=right:\scriptsize$c$]{};
		\node[vertexb] (4) at (0,-3) [label=left:\scriptsize$b$]{};
		\path[->] 
		(1)  edge[edge]  node[midway,left] {\scriptsize$3$}   (4)
		(1)  edge[edge]  node[midway,above] {\scriptsize$2$}   (3)
		(3)  edge[edge]  node[midway,below,sloped] {\scriptsize$7$}   (4);	
		
		\end{tikzpicture}
		\caption{}
                     \label{fig:example_tempG}
    \vspace{-10pt}
	\end{subfigure}
	\begin{subfigure}{0.24\textwidth}
		\centering
		\begin{tikzpicture}[scale=0.88]

		\node[vertexldg] (1) at (0,0) [label=left:\scriptsize$\textcolor{gray}{a}$]{};
		\node[vertexldg] (3) at (3,-3) [label=right:\scriptsize$\textcolor{gray}{c}$]{};
		\node[vertexldg] (4) at (0,-3) [label=left:\scriptsize$\textcolor{gray}{b}$]{};
		
		\node[vertexld] (a) at (1.8,-1.3) {}; 
		\node[vertexld] (b) at (1.3,-1.8) {}; 
		\node[vertexld] (d) at (-0.35,-1.5) {}; 
		\node[vertexld] (c) at (0.25,-1.5) {}; 
		\node[vertexld] (e) at (1.5,-3.25) {}; 
		\node[vertexld] (f) at (1.5,-2.75) {}; 
		
		\path[->] 
		(1)  edge[bend left=15,ldgedge]  node[pos=.2,left] {\scriptsize$3$}   (4)
		(1)  edge[bend left=15,ldgedge]  node[pos=.1,above] {\scriptsize$2$}   (3)
		(3)  edge[bend left=15,ldgedge]  node[pos=.2,below,sloped] {\scriptsize$7$}   (4)	
		(4)  edge[bend left=15,ldgedge]  node[pos=.2,left] {\scriptsize$3$}   (1)
		(3)  edge[bend left=15,ldgedge]  node[pos=.1,above] {\scriptsize$2$}   (1)
		(4)  edge[bend left=15,ldgedge]  node[pos=.2,below,sloped] {\scriptsize$7$}   (3)
		
		(a)  edge[bend left=60,dedge,line width=0.3mm]  node[pos=.2,left] {}   (e)
		(b)  edge[bend right=15,dedge,line width=0.3mm]  node[pos=.2,above] {}   (c)
		(c)  edge[bend right=15,dedge,line width=0.3mm]  node[pos=.2,below,sloped] {}   (f);
		
		\node[vertexld] (a) at (1.8,-1.3) {\scriptsize$n^2_{\overrightarrow{ac}}$}; 
		\node[vertexld] (b) at (1.3,-1.8) {\scriptsize$n^2_{\overrightarrow{ca}}$};
		\node[vertexld] (d) at (-0.35,-1.5) {\scriptsize$n^3_{\overrightarrow{ba}}$}; 
		\node[vertexld] (c) at (0.25,-1.5) {\scriptsize$n^3_{\overrightarrow{ab}}$};
		\node[vertexld] (e) at (1.5,-3.25) {\scriptsize$n^7_{\overrightarrow{cb}}$}; 
		\node[vertexld] (f) at (1.5,-2.75) {\scriptsize$n^7_{\overrightarrow{bc}}$}; 
		
		\end{tikzpicture}
		\caption{}
		\label{fig:example_ld}
        \vspace{-10pt}
	\end{subfigure}
	\begin{subfigure}{0.4\textwidth}
		\centering
		\begin{tikzpicture}[scale=0.75]

		\node[vertexa] (a2) at (0.5,0)[label=above:\scriptsize${(a,2)}$]{};
		\node[vertexa] (a3) at (2,0) [label=above:\scriptsize${(a,3)}$]{};
		\node[vertexb] (b3) at (2,-1.5) [label=below:\scriptsize${(b,3)}$]{};
		\node[vertexa] (c2) at (0.5,-3) [label=below:\scriptsize${(c,2)}$]{};
		\node[vertexa] (c3) at (2,-3) [label=below:\scriptsize${(c,3)}$]{};
		\node[vertexa] (a4) at (3.5,0) [label=above:\scriptsize${(a,4)}$]{};
		\node[vertexb] (b4) at (3.5,-1.5) [label=above:\scriptsize${(b,4)}$]{};
		\node[vertexa] (c7) at (5,-3) [label=below:\scriptsize${(c,7)}$]{};
		\node[vertexa] (c8) at (6.5,-3) [label=below:\scriptsize${(c,8)}$]{};	
		\node[vertexb] (b7) at (5,-1.5) [label=above:\scriptsize${(b,7)}$]{};
		\node[vertexb] (b8) at (6.5,-1.5) [label=above:\scriptsize${(b,8)}$]{};
		\path[->] 
		(a2)  edge[wedge]  node[midway,above,sloped] {\scriptsize$\omega$}   (a3)
		(b3)  edge[wedge,bend right=35]  node[midway,below,sloped] {\scriptsize$\omega$}   (b7)
		(b4)  edge[wedge]  node[midway,above,sloped] {\scriptsize$\omega$}   (b7)
		(c2)  edge[wedge,bend right=45]  node[midway,above,sloped] {\scriptsize$\omega$}   (c7)
		(c3)  edge[wedge]  node[midway,above,sloped] {\scriptsize$\omega$}   (c7)
		(a2)  edge[dedge]  node[midway,above,pos=0.3] {\scriptsize$\eta$}   (c3)
		(c2)  edge[dedge]  node[midway,below,pos=0.3] {\scriptsize$\eta$}   (a3)
		(a3)  edge[dedge]  node[midway,above,pos=0.3] {\scriptsize$\eta$}   (b4)
		(b3)  edge[dedge]  node[midway,below,pos=0.3] {\scriptsize$\eta$}   (a4)
		(b7)  edge[dedge]  node[midway,above,pos=0.3] {\scriptsize$\eta$}   (c8)
		(c7)  edge[dedge]  node[midway,below,pos=0.3] {\scriptsize$\eta$}   (b8);		
		\end{tikzpicture}
		\caption{}
		\label{fig:example_tg}
	\end{subfigure}
	\caption{(a) A temporal graph $\tg$ with (static) red and black vertex labels. (b) Directed line graph expansion $\mathit{DL}(\tg)$ (edges are in solid and vertex labels are omitted) (c) Static expansion $\mathit{SE}(\tg)$.}
\vspace{-15pt}
\end{figure}
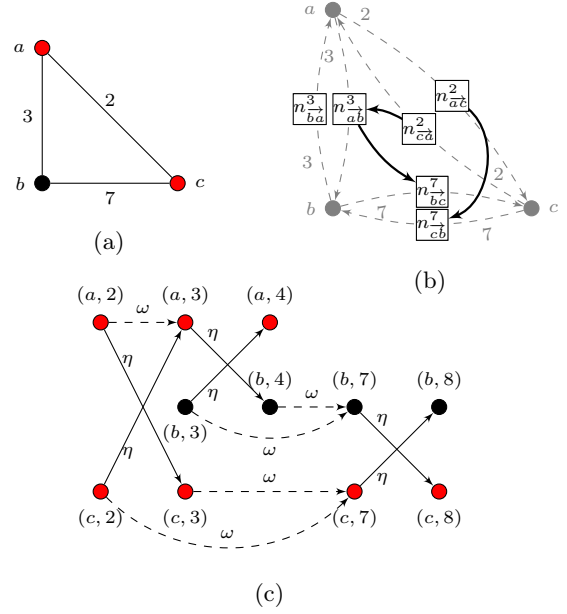
    Notice that the label sequence of a static walk in $SE(\tg)$ encodes the times of waiting.
	Since for all edges $((u,t_1),(v,t_2))\in E$ it holds that $t_1<t_2$, the resulting graph is acyclic.
	Finally, we have the following result.
	\begin{proposition}\label{theorem:se_size}
		The size of the static expansion $\mathit{SE}(\tg)=(U,E,l')$ of a temporal graph $\tg$ is in $\mathcal{O}(|\tge|)$. 
	\end{proposition}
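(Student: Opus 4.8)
The plan is to bound $|U|$ and $|E|$ separately, each by a constant multiple of $|\tge|$, directly from \Cref{def:staticexpansion}. For the vertex set, every temporal edge $(\{u,v\},t) \in \tge$ contributes at most the four time-vertices $(u,t)$, $(v,t)$, $(u,t+1)$, $(v,t+1)$ to $U$, so $|U| \le 4|\tge|$ is immediate.

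For the edge set I would treat the three parts $E_N$, $E_{W_1}$, $E_{W_2}$ in turn. The set $E_N$ contains exactly two edges per temporal edge, hence $|E_N| = 2|\tge|$. For $E_{W_1}$ and $E_{W_2}$ the key observation is that, for each fixed vertex $w \in V$, each such edge is determined by a pair $i,j \in T(w)$ that are \emph{consecutive} in the canonical ordering of $T(w)$, i.e.\ $\tau_w(i)+1 = \tau_w(j)$; there are at most $|T(w)| - 1$ such consecutive pairs. Summing over vertices, $|E_{W_1}| \le \sum_{w \in V} (|T(w)| - 1)$ and likewise $|E_{W_2}| \le \sum_{w \in V} (|T(w)| - 1)$.

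It then remains to bound $\sum_{w \in V} |T(w)|$. Since $T(w)$ is the set of availability times of edges incident to $w$, we have $|T(w)| \le d(w)$, and a handshake argument gives $\sum_{w \in V} d(w) = 2|\tge|$ because every temporal edge is incident to exactly two vertices. Combining the estimates yields $|E_{W_1}|, |E_{W_2}| \le 2|\tge|$, hence $|E| = |E_N| + |E_{W_1}| + |E_{W_2}| \le 6|\tge|$, and therefore $|U| + |E| \in \mathcal{O}(|\tge|)$, as claimed.

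There is no genuine obstacle here; the argument is pure bookkeeping. The only point that requires a moment's care is avoiding an over-count of the waiting-time edges: one must note that, for a fixed vertex $w$, these edges correspond to consecutive pairs in the ordered sequence $T(w)$ rather than to arbitrary pairs of time stamps, and only then feed this into the identity $\sum_{w \in V} d(w) = 2|\tge|$.
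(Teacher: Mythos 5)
Your proof is correct and follows essentially the same route as the paper: both arguments bound $|U|$ by $4|\tge|$ and $|E|$ by $6|\tge|$ by direct counting from the definition of $\mathit{SE}(\tg)$. The only cosmetic difference is that you charge the waiting edges to consecutive pairs in each ordered sequence $T(w)$ and close with a handshake argument, whereas the paper charges at most one waiting edge to each of the four time-vertices created for a temporal edge; both give the same constants.
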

\begin{proof}
    The size of $U$ is bounded by $4\cdot|\tge|$, because at most four vertices for each temporal edge $e\in \tge$ are inserted.
    For each $e\in \tge$, there are at most six edges in $E$. Two edges that represent using $e$ at time $t$ and four possible waiting edges, one at each vertex inserted for edge $e$.
    Consequently, $|E|\in\mathcal{O}(|\tge|)$.\hspace{\columnwidth}
\end{proof}
\section[Approximation for the directed line graph representation]{Approximation for the Directed Line Graph Representation\footnote{Section 4 has been updated and improved compared to the conference version \cite{oettershagen2020temporal} of this paper.}}\label{sec:approx}
Although the directed line graph expansion preserves the temporal information and is able to model waiting times, see~\Cref{table:comp} and~\Cref{lem:tmp_walk_to_walk}, the construction may lead to a blowup in graph size. 
Hence, we propose a stochastic variant based on sampling temporal walks with provable approximation guarantees.

Let $\tg=(V,\tge,l)$ be a temporal graph, the algorithm approximates the normalized feature vector $\widehat{\phi}^k_{\text{RW}}(\tg) = \phi^k_{\text{RW}}(\tg)/ \norm{\phi^k_{\text{RW}}(\tg)}_1$, resulting in the normalized kernel $\widehat{\kappa}^k_{\text{RW}}(\tg_1,\tg_2)=\langle\widehat{\phi}^k_{\text{RW}}(\tg_1), \widehat{\phi}^k_{\text{RW}}(\tg_2) \rangle$ for two temporal graphs $\tg_1$ and $\tg_2$. 
The algorithm starts by sampling $S$ temporal walks at random (with replacement) from all possible walks in $\tg$, where the exact value of $S$ will be determined later. We compute a histogram $\widetilde{\phi}^k_{\text{RW}}(\tg)$, where each entry $\widetilde{\phi}^k_{\text{RW}}(\tg)_s$ counts the number of temporal walks $w$ with $L(w)=s$ encountered during the above procedure, normalized by $\nicefrac{1}{|S|}$. 
\Cref{alg:as} shows the pseudocode. 
\begin{algorithm}\mbox{\hfill}
    \\\textbf{Input:} A temporal graph $\tg$, a walk length $k\in\mathbb{N}$, $S\in\mathbb{N}$.
    \\\textbf{Output:} A feature vector $\widetilde{\phi}^k_{\text{RW}}(\tg)$ of normalized temporal walk counts.
    \begin{algorithmic}[1]
        \State Initialize feature vector $\widetilde{\phi}^k_{\text{RW}}(\tg)$ to a vector of zeros
        \ParForAllLoop{$1,\ldots,S$}
        \State sample $k$-step walk $w$
        \State $\widetilde{\phi}^k_{\text{RW}}(\tg)_{L(w)} \gets \widetilde{\phi}^k_{\text{RW}}(\tg)_{L(w)} + \nicefrac{1}{S}$
        \EndParForAllLoop
        \State \Return$\widetilde{\phi}^k_{\text{RW}}(\tg)$ 
    \end{algorithmic}
    \caption{}
    \label{alg:as}
\end{algorithm}

We get the following result, showing that~\Cref{alg:as} can approximate the normalized (temporal) random-walk kernel $\widehat{\kappa}^k_{\text{RW}}(\tg_1, \tg_2)$ up to an additive error.  
\begin{theorem}\label{tgclass:theorem:approx}
    Let $\mathcal{G}$ be a set of temporal graphs with label alphabet $\Sigma$. Moreover, let $k > 0$, and let $\Gamma(\Sigma,k)$ denote an upperbound for the number of temporal walks of length $k$ with labels from $\Sigma$. By setting 
    \begin{equation}\label{samp}
    S = \frac{\log( 2 \cdot  |\mathcal{G}| \cdot \Gamma(\Sigma,k) \cdot 1/\delta)}{ 2 (\nicefrac{\lambda}{\Gamma(\Sigma,h)})^2  },
    \end{equation} 
    \Cref{alg:as} approximates the normalized temporal random walk kernel $\widehat{\kappa}_{\text{RW}}$ with probability $(1-\delta)$, such that 
    \begin{equation*}
    \sup_{\tg_1, \tg_2 \in \mathcal{G}} \left|    \widehat{\kappa}^k_{\text{RW}}(\tg_1, \tg_2) - \langle \widetilde{\phi}^k_{\text{RW}}(\tg_1), \widetilde{\phi}^k_{\text{RW}}(\tg_2)  \rangle  \right| \leq 3\lambda.
    \end{equation*}
\end{theorem}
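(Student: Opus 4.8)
The plan is to reduce each coordinate of the sampled histogram to an empirical mean of i.i.d.\ indicator variables, apply a Hoeffding bound coordinate-wise, union-bound over all graphs and all label sequences to obtain a uniform $\ell_\infty$ guarantee on the feature vectors, and finally convert this into an additive error bound on the inner product. \emph{First}, I would note that drawing a temporal walk $w$ uniformly at random from $\mathcal{W}^{\text{tmp}}_k(\tg)$ and recording $L(w)$ produces the label sequence $s$ with probability $|\{w \in \mathcal{W}^{\text{tmp}}_k(\tg) : L(w)=s\}| / |\mathcal{W}^{\text{tmp}}_k(\tg)| = \widehat{\phi}^k_{\text{RW}}(\tg)_s$. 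Hence, for a fixed $s$, the quantity $\widetilde{\phi}^k_{\text{RW}}(\tg)_s$ returned by \Cref{alg:as} is the average of $S$ i.i.d.\ Bernoulli variables with mean $\widehat{\phi}^k_{\text{RW}}(\tg)_s$, so in particular $\mathbb{E}\bigl[\widetilde{\phi}^k_{\text{RW}}(\tg)_s\bigr] = \widehat{\phi}^k_{\text{RW}}(\tg)_s$. (This step tacitly uses that line~3 of \Cref{alg:as} samples uniformly from the set of length-$k$ temporal walks.)

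\emph{Next}, since these variables lie in $[0,1]$, Hoeffding's inequality gives $\Pr\bigl[|\widetilde{\phi}^k_{\text{RW}}(\tg)_s - \widehat{\phi}^k_{\text{RW}}(\tg)_s| \geq \varepsilon\bigr] \leq 2\exp(-2S\varepsilon^2)$ for every $\varepsilon>0$. I would choose $\varepsilon = \lambda/\Gamma(\Sigma,k)$ and union-bound over the at most $|\mathcal{G}|$ graphs in $\mathcal{G}$ and the at most $\Gamma(\Sigma,k)$ distinct label sequences. Substituting the value of $S$ from \eqref{samp} turns the single-event probability into $\delta/(|\mathcal{G}|\cdot\Gamma(\Sigma,k))$, so the union bound yields that with probability at least $1-\delta$,
\[
\bigl|\widetilde{\phi}^k_{\text{RW}}(\tg)_s - \widehat{\phi}^k_{\text{RW}}(\tg)_s\bigr| \leq \frac{\lambda}{\Gamma(\Sigma,k)}
\quad\text{for all }\tg\in\mathcal{G}\text{ and all }s .
\]

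\emph{Finally}, condition on this event and fix $\tg_1,\tg_2\in\mathcal{G}$; write $a = \widehat{\phi}^k_{\text{RW}}(\tg_1)$, $b = \widehat{\phi}^k_{\text{RW}}(\tg_2)$, $\Delta_a = \widetilde{\phi}^k_{\text{RW}}(\tg_1)-a$, and $\Delta_b = \widetilde{\phi}^k_{\text{RW}}(\tg_2)-b$. By bilinearity, $\langle a+\Delta_a,\,b+\Delta_b\rangle - \langle a,b\rangle = \langle a,\Delta_b\rangle + \langle\Delta_a,b\rangle + \langle\Delta_a,\Delta_b\rangle$, and I would bound the three terms via Hölder's inequality: using $\norm{a}_1 = \norm{b}_1 = 1$ (the feature vectors are $\ell_1$-normalized and nonnegative) we get $|\langle a,\Delta_b\rangle| \leq \norm{a}_1\norm{\Delta_b}_\infty \leq \lambda/\Gamma(\Sigma,k) \leq \lambda$ and likewise $|\langle\Delta_a,b\rangle|\leq\lambda$; and since $a$, $b$, and therefore the sampled histograms, have support of size at most $\Gamma(\Sigma,k)$, we have $\norm{\Delta_b}_1 \leq \Gamma(\Sigma,k)\,\norm{\Delta_b}_\infty \leq \lambda$, whence $|\langle\Delta_a,\Delta_b\rangle| \leq \norm{\Delta_a}_\infty\,\norm{\Delta_b}_1 \leq \lambda$. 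Adding the three estimates gives the claimed bound $3\lambda$, and since all steps held uniformly over $\mathcal{G}$, it holds simultaneously for the supremum over $\tg_1,\tg_2$.

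I expect the routine part to be the Hoeffding-and-union-bound bookkeeping; the step that needs genuine care is the last one — passing from the coordinate-wise ($\ell_\infty$) guarantee to a bound on the inner product, in particular controlling the quadratic cross term $\langle\Delta_a,\Delta_b\rangle$. This is exactly where the factor $1/\Gamma(\Sigma,k)^2$ in the definition of $S$ is consumed: the extra $\Gamma(\Sigma,k)$ in the target accuracy compensates for the support-size blowup incurred when converting the $\ell_\infty$ bound on $\Delta_b$ into an $\ell_1$ bound. A minor additional point to make explicit is the normalization $\norm{\widehat{\phi}^k_{\text{RW}}(\tg)}_1 = 1$, which is what keeps the two linear terms at size $\lambda$ rather than growing with the support.
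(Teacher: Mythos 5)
Your proposal is correct and follows essentially the same route as the paper's proof: a Hoeffding bound plus a union bound over all graphs and label sequences to get a uniform coordinate-wise guarantee with $\varepsilon = \lambda/\Gamma(\Sigma,k)$, followed by expanding the inner product into two linear error terms and one quadratic cross term, each bounded by $\lambda$. Your version is in fact marginally tidier, since the symmetric decomposition via $\Delta_a,\Delta_b$ yields the two-sided (absolute-value) bound directly and makes explicit the $\ell_1$-normalization that the paper only invokes implicitly.
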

\begin{proof}
    First, by an application of the Hoeffding bound together with the Union bound, it follows that by setting 
    \begin{equation*}
    S = \frac{\log( 2 \cdot  |\mathcal{G}| \cdot \Gamma(\Sigma,k) \cdot 1/\delta)}{ 2\varepsilon^2  },
    \end{equation*}
    it holds that with probability $1-\delta$,
    \begin{equation*}
    \left| \widehat{\phi}^k_{\text{RW}}(\tg_i)_j - \widetilde{\phi}^k_{\text{RW}}(\tg_i)_j \right| \leq\varepsilon,
    \end{equation*} 
    for all $j$ for $1\leq  j \leq \Gamma(\Sigma,k)$, and all temporal graphs $\tg_i$ in $\mathcal{G}$. Let $\tg_1$ and $\tg_2$ in $\mathcal{G}$, then
    \begin{equation*}
    \begin{split}
    \widetilde{\kappa}^k_{\text{RW}}(\tg_1,\tg_2) &= \left\langle \widetilde{\phi}^k_{\text{RW}}(\tg_1), \widetilde{\phi}^k_{\text{RW}}(\tg_2) \right\rangle\\
    &\leq \smashoperator{\sum^{\Gamma(\Sigma,h)}_{i = 1}} \mleft(\widehat{\phi}^k_{\text{RW}}(\tg_1)_i \cdot\widehat{\phi}^k_{\text{RW}}(\tg_2)_i\mright) \\
    &\hspace{-8mm}+ \varepsilon \cdot \smashoperator{\sum^{\Gamma(\Sigma,h)}_{i = 1}} \mleft(\widehat{\phi}^k_{\text{RW}}(\tg_1)_i + \widehat{\phi}^k_{\text{RW}}(\tg_2)_i \mright) + \smashoperator{\sum^{\Gamma(\Sigma,h)}_{i = 1}} \varepsilon^2\\
    &\hspace{-8mm}\leq \widehat{\kappa}^k_{\text{RW}}(\tg_1,\tg_2) + 2 \Gamma(\Sigma,h)  \cdot \varepsilon + \Gamma(\Sigma,h)\cdot \varepsilon\,.
    \end{split}
    \end{equation*}
    The last inequality follows from the fact that the components of $\widehat{\phi}^k_{\text{RW}}(\cdot)$ are in $[0,1]$. The result then follows by setting $\varepsilon = \nicefrac{\lambda}{\Gamma(\Sigma,h)}$.
\end{proof}
\Cref{alg:as} can be easily modified for the static (non-temporal) graphs and applied to all three of our kernel approaches.

So far, we assumed that we have access to an oracle to sample the temporal walks uniformly at random from the temporal graph $\tg$.
One possible practical way to obtain the uniformly sampled temporal walks is to use \emph{rejection sampling}. 
To this end, we start by uniformly sampling a random edge $e_1=(\{u,v\},t_1)\in \tge$ and extend it to a $k$-step temporal random walk $w$
starting with probability $\nicefrac{1}{2}$ at node $u$ or node $v$, respectively, and at time  $t_1+1$.
At each step, we choose an incident edge uniformly from all edges with availability time not earlier than the arrival time at the current vertex. 
The probability of choosing a temporal walk $w=(v_1,e_1,\ldots,v_i,e_i=(\{v_i,v_{i+1}\},t_i),\ldots,e_k,v_{k+1})$ from all possible temporal walks of length $k$ is then $P_w=\frac{1}{2\cdot|\tge|}\cdot \Pi_{i=2}^{k}\frac{1}{d(v_i,t_{i-1}+1)}$ with $d(v_i,t_{i-1}+1)$ being the number of incident edges to $v_i$ 
with availability time at least $t_{i-1}+1$, i.e., the arrival time at $v_i$. 
On the other hand, a lower bound for the probability of any $k$-step temporal walk is $P_{min}=\frac{1}{|V|}\cdot \frac{1}{\Delta^{k}}$ with $\Delta$ being the maximal degree in $\tg$.
To obtain uniform probabilities for the sampled walks, we accept a walk only with probability $\frac{P_{min}}{P_{w}}$. Otherwise, we reject $w$ and restart the walk sampling procedure. 
Hence, each accepted walk has the uniform probability of $P_w \cdot \frac{P_{min}}{P_{w}}=P_{min}$.

Notice that we can forgo the rejection step in the case that the graphs for which we compute the kernel have similar biases during the walk sampling procedure.

\section{Experiments}\label{sec:experiments}
  \begin{table*}
    \begin{center}
        \caption{Statistics and properties of the data sets.  %
        }  %
        \label{table:datasets_stats2}
        \resizebox{0.65\textwidth}{!}{ 	\renewcommand{\arraystretch}{.8}
            \begin{tabular}{lc@{\hspace{3mm}}c@{\hspace{3mm}}c@{\hspace{3mm}}c@{\hspace{3mm}}c@{\hspace{3mm}}c@{\hspace{3mm}}c}\toprule
                \multirow{3}{2cm}{\vspace*{4pt}\textbf{Properties}\vspace*{4pt}}&\multicolumn{6}{c}{\textbf{Data set}}\\
                \cmidrule{2-7}
                \textbf{}               &  \textsc{Mit}    & \textsc{Highschool}  & \textsc{Infectious}&  \textsc{Tumblr}   & \textsc{Dblp}  & \textsc{Facebook}  \\ \midrule
                Size			     	&	$97$           &       $180$        &       $200$      &             $373$      &   $755$        &       $995$        \\
                $\varnothing$ $|V|$     &   $20$           &       $52.3$       &       $50$       &             $53.1$     &     $52.9$     &       $95.7$       \\ 	
                $\min|\tge|$            &   $126$          &       $286$        &       $218$      &             $96$       &     $206$      &       $176$        \\
                $\max|\tge|$            &   $3\,363$       &       $517$        &       $505$      &             $190$      &    $225$       &       $181$        \\ 
                $\varnothing$ $|\tge|$  &   $702.8$        &       $262.4$      &       $220.4$    &             $98.7$     &  $156.8$       &       $133.9$      \\ 
                $\varnothing$ $\max d(v)$  &   $680.7$     &       $92.5$       &       $43.8$     &             $24.4$     &  $26.4$        &       $21.0$       \\  %
                \bottomrule
            \end{tabular}
        }
    \end{center}	
\vspace{-10pt}
\end{table*}
\begin{table*}[t]\centering
	\caption{Classification accuracy in percent and standard deviation for the first classification task. } %
	\label{table:results}
	\resizebox{0.75\textwidth}{!}{ 	\renewcommand{\arraystretch}{0.9}
		\begin{tabular}{@{}c <{\enspace}@{}l@{\hspace{3mm}}ccccccc@{}}	\toprule
			
			& \multirow{3}{*}{\vspace*{4pt}\textbf{Kernel}}&\multicolumn{6}{c}{\textbf{Data set}}\\\cmidrule{3-8}
			&  & {\textsc{Mit}}         &  {\textsc{Highschool}}      &     {\textsc{Infectious}}       & {\textsc{Tumblr}}       & {\textsc{Dblp}} &  {\textsc{Facebook}} \\ \toprule
				
            \multirow{2}{*}{\rotatebox{90}{\small Static}}
            & \emph{Stat-RW} & $61.03$ $\sz\pm 2.4$ & $63.94$ $\sz\pm 2.3$ & $76.40$ $\sz\pm 1.5$ & $83.68$ $\sz\pm 1.4$ & $83.75$ $\sz\pm 0.8$ & $86.43$ $\sz\pm 0.4$ \\
            & \emph{Stat-WL} & $42.52$ $\sz\pm 2.6$ & $45.33$ $\sz\pm 2.7$ & $68.95$ $\sz\pm 2.0$ & $78.69$ $\sz\pm 1.5$ & $78.56$ $\sz\pm 0.8$ & $83.41$ $\sz\pm 0.6$ \\
            \cmidrule{1-8}
            \multirow{9}{*}{\rotatebox{90}{\small Temporal}}
            & \emph{RD-RW} & $66.20$ $\sz\pm 2.6$ & $90.83$ $\sz\pm 1.3$ & $90.40$ $\sz\pm 1.0$ & $76.15$ $\sz\pm 1.5$ & $91.69$ $\sz\pm 0.5$ & $83.73$ $\sz\pm 0.8$ \\
            & \emph{RD-WL} & $83.41$ $\sz\pm 0.5$ & $90.06$ $\sz\pm 0.6$ & $91.75$ $\sz\pm 1.0$ & $70.59$ $\sz\pm 0.9$ & $90.55$ $\sz\pm 0.5$ & $82.04$ $\sz\pm 0.7$ \\\addlinespace[0.8mm]
            & \emph{DL-RW} & $\textbf{92.91}$ $\sz\pm 0.9$ & $97.39$ $\sz\pm 0.7$ & $97.95$ $\sz\pm 0.4$ & $95.15$ $\sz\pm 0.6$ & $\textbf{98.86}$ $\sz\pm 0.1$ & $96.46$ $\sz\pm 0.1$ \\
            & \emph{DL-WL} & $91.68$ $\sz\pm 1.6$ & $\textbf{99.17}$ $\sz\pm 0.6$ & $\textbf{98.05}$ $\sz\pm 0.4$ & $94.19$ $\sz\pm 0.4$ & $98.49$ $\sz\pm 0.2$ & $\textbf{96.59}$ $\sz\pm 0.2$ \\\addlinespace[0.8mm]
            & \emph{SE-RW} & $88.56$ $\sz\pm 1.0$ & $96.89$ $\sz\pm 0.7$ & $97.30$ $\sz\pm 1.1$ & $\textbf{95.31}$ $\sz\pm 0.5$ & $98.46$ $\sz\pm 0.3$ & $95.68$ $\sz\pm 0.3$ \\
            & \emph{SE-WL} & $89.52$ $\sz\pm 1.7$ & $97.56$ $\sz\pm 0.7$ & $95.40$ $\sz\pm 0.7$ & $94.13$ $\sz\pm 1.0$ & $97.23$ $\sz\pm 0.2$ & $95.26$ $\sz\pm 0.3$ \\\addlinespace[0.8mm]
            & \emph{APPROX (S=50)} & $82.84$ $\sz\pm 2.0$ & $87.61$ $\sz\pm 1.7$ & $83.40$ $\sz\pm 1.6$ & $89.33$ $\sz\pm 0.7$ & $93.12$ $\sz\pm 0.5$ & $89.39$ $\sz\pm 0.5$ \\
            & \emph{APPROX (S=100)} & $85.12$ $\sz\pm 2.6$ & $90.22$ $\sz\pm 1.9$ & $91.05$ $\sz\pm 1.0$ & $90.40$ $\sz\pm 1.0$ & $95.67$ $\sz\pm 0.5$ & $92.40$ $\sz\pm 0.4$ \\
            & \emph{APPROX (S=250)} & $89.30$ $\sz\pm 2.8$ & $94.00$ $\sz\pm 1.3$ & $95.05$ $\sz\pm 0.9$ & $92.74$ $\sz\pm 0.3$ & $97.22$ $\sz\pm 0.5$ & $94.64$ $\sz\pm 0.3$ \\	
            \bottomrule  \end{tabular}}
            \vspace{10pt}
	\caption{Classification accuracy in percent and standard deviation for the second classification task.
		} %
	\label{table:results2}
	\resizebox{0.75\textwidth}{!}{ 	\renewcommand{\arraystretch}{0.9}
		\begin{tabular}{@{}c <{\enspace}@{}l@{\hspace{3mm}}ccccccc@{}}	\toprule	
			& \multirow{3}{*}{\vspace*{4pt}\textbf{Kernel}}&\multicolumn{6}{c}{\textbf{Data set}}\\\cmidrule{3-8}
			&  & {\textsc{Mit}}         &  {\textsc{Highschool}}      &     {\textsc{Infectious}}       & {\textsc{Tumblr}}       & {\textsc{Dblp}} &  {\textsc{Facebook}} \\ \toprule		
            \multirow{2}{*}{\rotatebox{90}{\small Static}}
            & \emph{Stat-RW} & $57.71$ $\sz\pm 4.0$ & $62.83$ $\sz\pm 2.9$ & $66.05$ $\sz\pm 2.4$ & $67.35$ $\sz\pm 0.9$ & $60.60$ $\sz\pm 1.4$ & $67.16$ $\sz\pm 0.6$ \\
            & \emph{Stat-WL} & $40.77$ $\sz\pm 2.5$ & $64.06$ $\sz\pm 2.4$ & $64.00$ $\sz\pm 1.3$ & $70.11$ $\sz\pm 0.5$ & $64.33$ $\sz\pm 0.8$ & $69.92$ $\sz\pm 0.3$ \\
            \cmidrule{1-8}
            \multirow{9}{*}{\rotatebox{90}{\small Temporal}}
            & \emph{RD-RW} & $60.57$ $\sz\pm 4.7$ & $80.06$ $\sz\pm 1.9$ & $74.00$ $\sz\pm 1.5$ & $69.54$ $\sz\pm 1.0$ & $64.24$ $\sz\pm 1.2$ & $66.35$ $\sz\pm 0.6$ \\
            & \emph{RD-WL} & $69.20$ $\sz\pm 1.8$ & $83.94$ $\sz\pm 0.7$ & $77.40$ $\sz\pm 1.2$ & $69.74$ $\sz\pm 0.5$ & $67.30$ $\sz\pm 0.5$ & $66.66$ $\sz\pm 0.6$ \\\addlinespace[0.8mm]
            & \emph{DL-RW} & $\textbf{82.64}$ $\sz\pm 2.1$ & $\textbf{93.44}$ $\sz\pm 1.0$ & $\textbf{88.65}$ $\sz\pm 1.2$ & $77.21$ $\sz\pm 1.0$ & $81.79$ $\sz\pm 0.9$ & $79.97$ $\sz\pm 0.5$ \\
            & \emph{DL-WL} & $36.40$ $\sz\pm 4.0$ & $89.33$ $\sz\pm 0.7$ & $78.65$ $\sz\pm 1.5$ & $78.18$ $\sz\pm 1.3$ & $76.45$ $\sz\pm 1.0$ & $79.99$ $\sz\pm 0.5$ \\\addlinespace[0.8mm]
            & \emph{SE-RW} & $57.27$ $\sz\pm 3.3$ & $93.28$ $\sz\pm 1.3$ & $87.20$ $\sz\pm 1.1$ & $78.23$ $\sz\pm 0.9$ & $\textbf{83.09}$ $\sz\pm 0.6$ & $\textbf{80.15}$ $\sz\pm 0.6$ \\
            & \emph{SE-WL} & $50.79$ $\sz\pm 4.2$ & $91.00$ $\sz\pm 1.2$ & $79.80$ $\sz\pm 1.5$ & $\textbf{80.64}$ $\sz\pm 0.7$ & $82.04$ $\sz\pm 0.6$ & $75.29$ $\sz\pm 0.3$ \\\addlinespace[0.8mm]
            & \emph{APPROX (S=50)} & $60.41$ $\sz\pm 4.0$ & $76.39$ $\sz\pm 3.0$ & $74.60$ $\sz\pm 1.8$ & $74.73$ $\sz\pm 1.3$ & $69.70$ $\sz\pm 0.8$ & $72.04$ $\sz\pm 0.7$ \\
            & \emph{APPROX (S=100)} & $65.41$ $\sz\pm 4.2$ & $83.83$ $\sz\pm 1.8$ & $75.60$ $\sz\pm 1.9$ & $76.49$ $\sz\pm 1.5$ & $74.50$ $\sz\pm 0.7$ & $73.08$ $\sz\pm 0.6$ \\
            & \emph{APPROX (S=250)} & $66.93$ $\sz\pm 2.5$ & $90.39$ $\sz\pm 1.8$ & $78.60$ $\sz\pm 2.1$ & $78.44$ $\sz\pm 1.3$ & $76.44$ $\sz\pm 0.9$ & $77.88$ $\sz\pm 0.6$ \\
            \bottomrule  
        \end{tabular}}
\end{table*}
To evaluate our proposed approaches and investigate their benefits compared to static graph kernels, we address the following questions:
\begin{description}
	\item[Q1] How well do our temporal kernels compare to each other and static approaches in terms of \textbf{(a)} accuracy and \textbf{(b)} running time?
	\item[Q2] How does the approximation for the directed line graph approach compare to the exact algorithm?
	\item[Q3] How is the classification accuracy affected by incomplete knowledge of the dissemination process?
\end{description}
\subsection{Data Sets}

We used the following real-world temporal graph data sets representing different types of social interactions. 
\begin{dsitemize}\itemsep-0pt
	\item \emph{Infectious} and \emph{Highschool:}
	Two data sets from the \emph{SocioPatterns} project.\footnote{\url{http://www.sociopatterns.org}}
	The Infectious graph represents face-to-face 
	contacts between visitors of the exhibition \textit{Infectious: Stay Away}~\cite{Isella2011}.
	The Highschool graph is a contact network and represents interactions between students in twenty second intervals over seven days. 
	\item \emph{MIT:} A temporal graph of interactions among students of the Massachusetts Institute of Technology~\cite{konect:eagle06}. 
	\item \emph{Facebook} and \emph{Tumblr:} 
	The first graph is a subset of the activity of the New Orleans Facebook  community over three months
    ~\cite{viswanath2009evolution}. The Tumblr graph contains quoting between Tumblr users and is a subset of the Memetracker\footnote{\url{snap.stanford.edu/data/memetracker9.html}} data set.  Rozenshtein et al.~\cite{rozenshtein2016reconstructing} used these  graphs and epidemic simulations to reconstruct dissemination processes.
	\item \emph{DBLP:} We used a subset of the DBLP\footnote{\url{https://dblp.uni-trier.de/}} database to generate temporal co-author graphs. The subset was chosen by considering publications in proceedings of selected machine learning conferences. 
	The time stamp of an edge is the year of the joint publication. 
\end{dsitemize}
To obtain data sets for supervised graph classification, we generated induced subgraphs by starting a BFS run from each vertex. 
We terminated the BFS when at most $20$ vertices in case of the MIT, $50$ vertices in case of the Infectious, $60$ vertices in case of the Highschool, Tumblr or DBLP, and $100$ vertices in case of the Facebook graph had been added.
Using the above procedure, we generated between $97$ and $995$ graphs for each of the data sets.
See \cref{table:datasets_stats2} %
 for data set statistics. All data sets will be made publicly available. %
In the following we describe the model for the dissemination process and the classification tasks.
\\\\
\emph{Dissemination Process Simulation}~~
We simulated a dissemination process on each of the induced subgraphs according to the \emph{susceptible-infected (SI)} model---a common approach to simulate epidemic spreading, e.g., see~\cite{bai2017optimizing}. %
In the SI model, each vertex is either \new{susceptible} or \new{infected}. 
An infected vertex is part of the temporal dissemination process. %
An initial seed of $s$ vertices is selected randomly and labeled as infected. %
Infections propagate in discrete time steps along temporal edges with a fixed probability $0<p \leq 1$. If a vertex is infected it stays infected indefinitely. 
A newly infected vertex may infect its neighbors at the next time step.
The simulation runs until at least $|V|\cdot I$ vertices with $0<I\leq 1$ are infected, or no more infections are possible.  
\\\\
\emph{Classification Tasks}~~
We consider two classification tasks.  
The first is the discrimination of temporal graphs with vertex labels corresponding to observations of a dissemination process and temporal graphs in which the labeling is not a result of a dissemination process.
Here, for each data set, we run the SI simulation with equal parameters of $s=1$, $p=0.5$ and $I=0.5$  for all graphs.
We used half of the data set as our first class. For our second class, we used the remaining graphs. For each graph in the second class, we counted the number $V_{\text{inf}}$ of infected vertices, reset the labels of all vertices back to uninfected, and finally infected $V_{\text{inf}}$ vertices randomly at a random time.

The second classification task is the discrimination of temporal graphs that differ in the dissemination processes itself.
Therefore, we run the SI simulation with different parameters for each of the two subsets.
For both subsets we set $s=1$ and $I=0.5$.
However, for the first subset of graphs we set the infection probability $p=0.2$ and for the second subset we set $p=0.8$.
The simulation runs repeatedly until at least $|V|\cdot I$ vertices are infected or no more infections are possible.
Notice that a classification by only counting the number of infected vertices is impossible for the classification tasks.

In order to evaluate our methods under conditions with incomplete information, we generated additional data sets based on \emph{Infectious} for both classification tasks.
For each graph, we randomly set the labels of $\{10\%,\ldots,80\%\}$ of the infected vertices back to non-infected.
We repeated this ten times resulting in 80 data sets for each of the two classification tasks.

\begin{table*}\centering	%
\caption{Running times in ms for the first classification task, random walk length $k=3$ ($k=2$ for \emph{DL-RW}) and number of iterations of WL $h=3$ ($h=2$ for \emph{DL-WL}).} %
\label{table:runningtimes1}
\resizebox{0.75\textwidth}{!}{ 	\renewcommand{\arraystretch}{.8}
    \begin{tabular}{@{}c <{\enspace}@{}l@{\hspace{1mm}}ccccccc@{}}	\toprule
        
        & \multirow{3}{*}{\vspace*{4pt}\textbf{Kernel}}&\multicolumn{6}{c}{\textbf{Data set}}\\\cmidrule{3-8}
        &  & {\textsc{Mit}}        & \textsc{Highschool}  & \textsc{Infectious}&  \textsc{Tumblr}   & \textsc{Dblp}  & \textsc{Facebook} \\ \toprule
        
        \multirow{2}{*}{\rotatebox{90}{\small Stat.}}
        & \emph{Stat-RW} & $50330$ & $124060$  & $130630$ & $29209$ & $1090141$ & $456421$ \\ 
        & \emph{Stat-WL} & $87$ & $106$  & $129$ & $164$ & $410$ & $\textbf{715}$ \\ 
        \cmidrule{1-8}\multirow{9}{*}{\rotatebox{90}{\small Temporal}}
        & \emph{RD-RW} & $270$ & $4017$  & $10259$ & $14563$ & $99327$ & $357414$ \\ 
        & \emph{RD-WL} & $\textbf{22}$ & $\textbf{84}$  & $\textbf{96}$ & $\textbf{145}$ & $\textbf{405}$ & $757$ \\ \addlinespace[0.8mm] 
        & \emph{DL-RW} & $26447614$ & $33643$  & $10764$ & $2887$ & $6715$ & $5050$ \\ 
        & \emph{DL-WL} & $71636$ & $1972$  & $847$ & $429$ & $1242$ & $1131$ \\ \addlinespace[0.8mm] 
        & \emph{SE-RW} & $7211$ & $3660$  & $4973$ & $2018$ & $6454$ & $5173$ \\ 
        & \emph{SE-WL} & $581$ & $382$  & $262$ & $302$ & $1097$ & $1359$ \\ \addlinespace[0.8mm] 
        & \emph{APPROX (S=50)} & $188$ & $301$  & $309$ & $409$ & $1261$ & $1926$ \\ 
        & \emph{APPROX (S=100)} & $363$ & $412$  & $599$ & $944$ & $1982$ & $3452$ \\ 
        & \emph{APPROX (S=250)} & $896$ & $1589$  & $1451$ & $1901$ & $5044$ & $7607$ \\ 
        \bottomrule  \end{tabular}}
\end{table*}
\begin{table*}[t]\centering	
    \caption{Running times in ms for the second classification task, random walk length $k=3$ ($k=2$ for \emph{DL-RW}) and number of iterations of WL $h=3$ ($h=2$ for \emph{DL-WL}).} 
    \label{table:runningtimes2}
    \resizebox{0.75\textwidth}{!}{ 	\renewcommand{\arraystretch}{.8}
        \begin{tabular}{@{}c <{\enspace}@{}l@{\hspace{3mm}}ccccccc@{}}	\toprule
            
            & \multirow{3}{*}{\vspace*{4pt}\textbf{Kernel}}&\multicolumn{6}{c}{\textbf{Data set}}\\\cmidrule{3-8}
            &  & {\textsc{Mit}}        & \textsc{Highschool}  & \textsc{Infectious}&  \textsc{Tumblr}   & \textsc{Dblp}  & \textsc{Facebook}  \\ \toprule
            
            \multirow{2}{*}{\rotatebox{90}{Stat.}}
            & \emph{Stat-RW} & $48497$ & $107947$  & $139620$ & $32353$ & $1266452$ & $529073$ \\ 
            & \emph{Stat-WL} & $73$ & $106$  & $128$ & $164$ & $401$ & $\textbf{668}$ \\ 
            \cmidrule{1-8}\multirow{9}{*}{\rotatebox{90}{Temporal}}
            & \emph{RD-RW} & $294$ & $4516$  & $18432$ & $18436$ & $114842$ & $413073$ \\ 
            & \emph{RD-WL} & $\textbf{25}$ & $\textbf{83}$  & $\textbf{98}$ & $\textbf{141}$ & $\textbf{385}$ & $729$ \\ \addlinespace[0.8mm] 
            & \emph{DL-RW} & $26274726$ & $32919$  & $10590$ & $2890$ & $9548$ & $4380$ \\ 
            & \emph{DL-WL} & $71180$ & $2121$  & $806$ & $423$ & $1196$ & $1054$ \\ \addlinespace[0.8mm] 
            & \emph{SE-RW} & $7462$ & $3743$  & $4820$ & $1792$ & $6173$ & $6468$ \\ 
            & \emph{SE-WL} & $575$ & $369$  & $261$ & $293$ & $846$ & $1126$ \\ \addlinespace[0.8mm] 
            & \emph{APPROX (S=50)} & $185$ & $360$  & $305$ & $476$ & $1170$ & $1590$ \\ 
            & \emph{APPROX (S=100)} & $361$ & $573$  & $595$ & $807$ & $2318$ & $3696$ \\ 
            & \emph{APPROX (S=250)} & $897$ & $1706$  & $1525$ & $1869$ & $5093$ & $7318$ \\ 
            \bottomrule  \end{tabular}}
\end{table*}
\begin{figure*}[t]
    \centering
    \begin{subfigure}{0.44\textwidth}
        \centering	
        \includegraphics[width=1\linewidth]{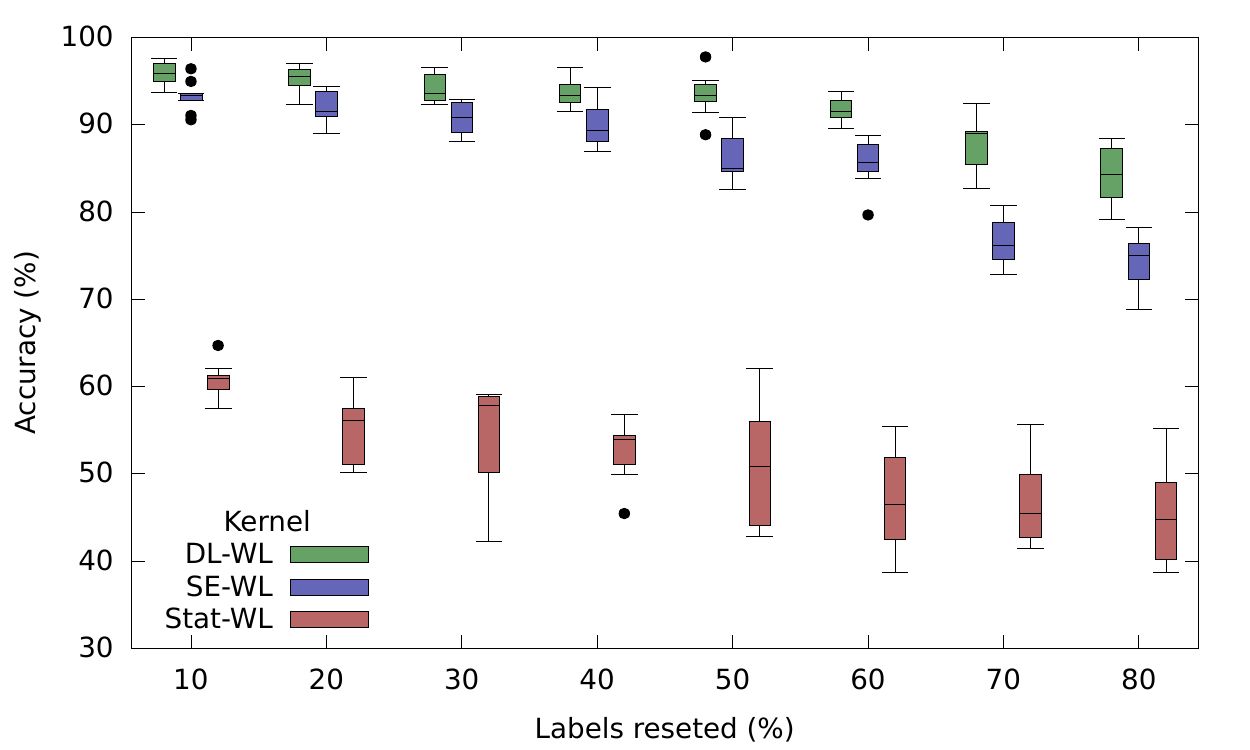}
        \subcaption{Results for the first classification task.}
        \label{fig:dataa}        
    \end{subfigure}
    \begin{subfigure}{0.44\textwidth}
        \centering
        \includegraphics[width=1\linewidth]{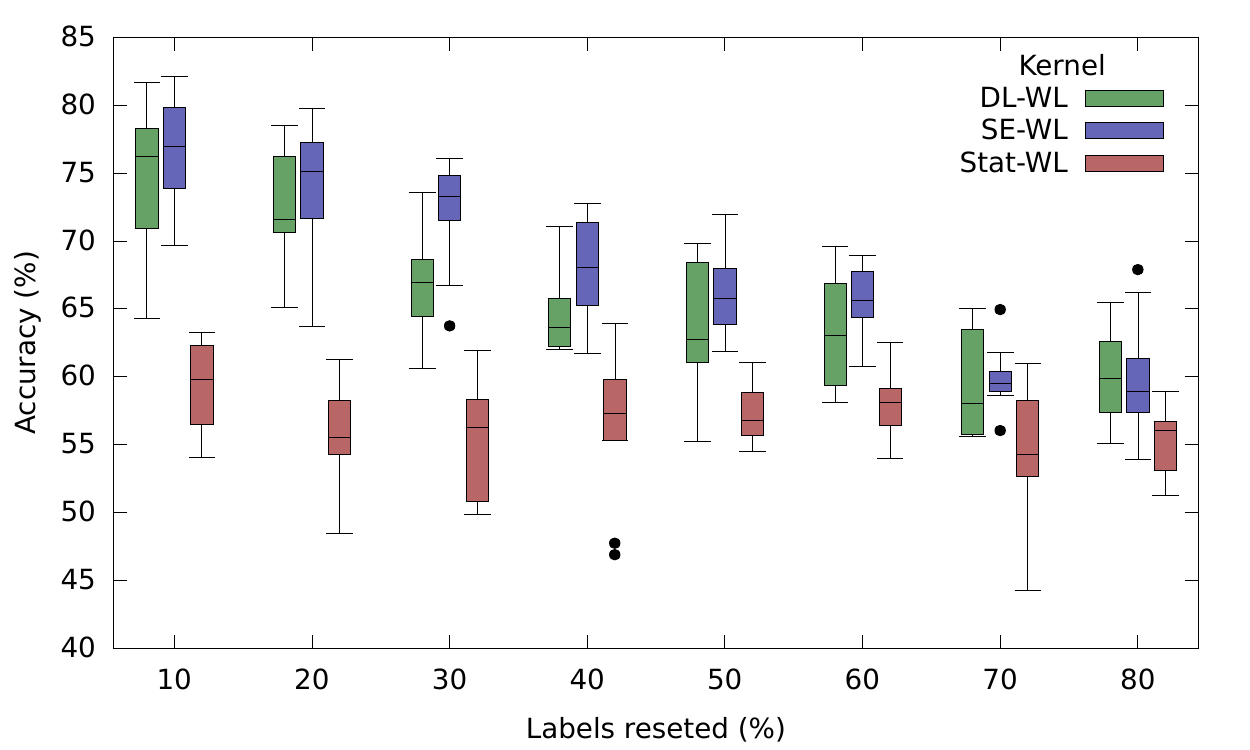}
        \subcaption{Results for the second classification task.}
        \label{fig:datab}
    \end{subfigure}
    \caption{Results of  \emph{DL-WL}, \emph{SE-WL} and \emph{Stat-WL} kernel for the \emph{Infectious} data set under incomplete data. }
    \label{fig:uncdata}
    \vspace{-10pt}
\end{figure*}
\subsection{Graph kernels} 
As a baseline we use the $k$-step random walk (\emph{Stat-RW}) and the Weisfeiler-Lehman subtree (\emph{Stat-WL}) kernel on the static graphs obtained by interpreting the time stamps as discrete edge labels, and assigning to each vertex the concatenated sequence of its labels.
To evaluate the three approaches of \cref{sec:temporalk}, we use the $k$-step random walk and the Weisfeiler-Lehman subtree kernel, resulting in the following kernel instances: (1) \emph{RD-RW} and \emph{RD-WL}, which use the reduced graph representation (\Cref{subsec:rdrep}), (2) \emph{DL-RW} and \emph{DL-WL}, which use the directed line graph expansion (\Cref{subsec:dlk}), (3)  \emph{SE-RW} and \emph{SE-WL}, which use the static expansion (\Cref{subsec:sek}).
We evaluate the approximation (\emph{APPROX}) for the directed line graph expansion, proposed in \Cref{sec:approx}, with sample sizes $S=50$, $S=100$ and $S=250$. 
\subsection{Experimental Protocol}\label{ep}

For each kernel, we computed the normalized Gram matrix. We report the classification accuracies obtained with the $C$-SVM implementation of \text{LIBSVM}~\cite{Cha+2011}, using 10-fold cross validation. The $C$-parameter was selected from $\{10^{-3}, 10^{-2}, \dotsc, 10^{2},$ $10^{3}\}$ by 10-fold cross validation on the training folds. 			
We repeated each 10-fold cross validation ten times with different random folds, and report average accuracies and standard deviations.
The number of steps of the random walk kernel ($k\in\{0,\ldots,5\}$) and the number of iterations of the  Weisfeiler-Lehman subtree kernel ($h\in\{0,\ldots,5\}$) were selected by fold-wise 10-fold cross-validation. 
All experiments were conducted on a workstation with an 
\text{Intel Xeon E5-2640v3} with 2.60\si GHz and 128\si GB of RAM running \text{Ubuntu 16.04.6} LTS using a single core. We used GNU \CC Compiler 5.5.0 with the flag \texttt{--O2}.\footnote{The code will available at \url{https://www.github.com}}
To compare running times, we set the walk length of \emph{DL-RW} to $k=2$, and for \emph{DL-WL} we set the number of iterations to $h=2$.

\subsection{Results and Discussion}
In the following we answer questions \textbf{Q1} to \textbf{Q3}.  
\\\\
\textbf{Q1}	
\Cref{table:results} and \Cref{table:results2} show that taking temporal information into account is crucial.
Our approaches lead to improvements in accuracy over all data sets. In most cases the improvement is substantial. 
For the first classification task, \emph{DL-RW} and \emph{DL-WL} reach the best accuracies for all but the \textsc{Tumblr} data set, here \emph{SE-RW} is best.
However, also for the other data sets \emph{SE-RW} and \emph{SE-WL} are on par with slightly lower accuracies.
For the second classification task, we have a similar situation, our approaches beat the static kernels in all cases. The \emph{Stat-RW} and \emph{Stat-WL} kernels have a significantly lower accuracy for all data sets and are not able to successfully detect dissemination processes.
This classification task poses a greater challenge for the temporal kernels which reach less good results compared to the first classification task.
Especially the \textsc{Mit} data set seems to be hard, only the \emph{DL-RW} reaches an accuracy of over $80\%$.
However, it also has the overall highest running time for this data set due to its quadratic blowup. See \Cref{table:runningtimes1} for the running times of the first classification task (\Cref{table:runningtimes2} shows similar values for the second task). %
The running times for the random walk kernels are by orders of magnitude higher than the ones of the Weisfeiler-Lehman kernels.
The reduced graph kernels cannot compete  with our other approaches in terms of accuracy. In particular for the second classification task the loss of temporal information led to lower accuracies.
However, the running times, especially of \emph{RD-WL} are low. 
For a lower average number of temporal edges and vertex degree, its advantage gained by reducing the number of edges decreases, and with larger data sets the running times increase. \emph{RD-RW} and \emph{RD-WL} deliver slightly worse results for the \emph{Facebook} data set compared to the static kernels for both tasks.
\\
\textbf{Q2} 
For a sample size of $S=50$ \emph{APPROX} performs better than the static kernels. %
And, the accuracies are on par or better than the ones of the reduced graph kernels.
With a larger sample sizes of $S=100$ and $S=250$ the gap between the accuracies of \emph{APPROX} and \emph{DL-RW} is reduced for all data sets in both classification tasks. 
\Cref{table:runningtimes1} shows that the running time of the approximation algorithm is by orders of magnitude faster for the \textsc{Mit} data set.
For $S=50$ and $S=100$ there is an improvement in running times for all data sets. For $S=250$ the running times of the exact algorithm for the \emph{Facebook} data set is faster. 
\\
\textbf{Q3}
We ran the Weisfeiler-Lehman subtree kernels for the \emph{Infectious} data sets where formerly infected vertices were randomly set to non-infected. 
For the first classification task \emph{DL-WL} and \emph{SE-WL} keep high average accuracy, see \cref{fig:dataa}. 
The \emph{Stat-WL} kernel falls under $50\%$ accuracy. 
For the second task the \emph{SE-WL} kernel achieves better average accuracy than the \emph{DL-WL} kernel for up to $70\%$ of reset labels, see \Cref{fig:datab}.
Only for $80\%$ the \emph{DL-WL} kernel achieves better average accuracy.
\section{Conclusion}\label{sec:conclusion}
We introduced a framework lifting static kernels to the temporal domain, and obtained variants of the Weisfeiler-Lehman subtree and the $k$-step random walk kernel. 
Furthermore, we introduced a stochastic kernel directly based on temporal walks with provable approximation guarantees.
We empirically evaluated our methods on real-world social networks showing that incorporating temporal information is crucial for classifying temporal graphs under consideration of dissemination processes. 
Moreover, we showed that the approximation approach performs well and is able to speed up computation by orders of magnitude. %
Additionally, we demonstrated that our proposed kernels work in scenarios where information of the dissemination process is incomplete or missing.
We believe that our techniques are a stepping stone for developing neural approaches for temporal graph representation learning. 

\balance
\bibliographystyle{plain}
\bibliography{literature}
\end{document}